\newcolumntype{L}[1]{>{\raggedright\let\newline\\\arraybackslash\hspace{0pt}}m{#1}}
\newcolumntype{C}[1]{>{\centering\let\newline\\\arraybackslash\hspace{0pt}}m{#1}}
\newcolumntype{R}[1]{>{\raggedleft\let\newline\\\arraybackslash\hspace{0pt}}m{#1}}
\let\MYcaption\@makecaption
\let\@makecaption\MYcaption
\let\oldgls\gls
\let\oldglspl\glspl
\newcommand\fussy@ifnextchar[3]{%
	\let\reserved@d=#1%
	\def\reserved@a{#2}%
	\def\reserved@b{#3}%
	\futurelet\@let@token\fussy@ifnch}
\def\fussy@ifnch{%
	\ifx\@let@token\reserved@d
		\let\reserved@c\reserved@a
	\else
		\let\reserved@c\reserved@b
	\fi
	\reserved@c}
\renewcommand{\gls}[1]{%
\oldgls{#1}\fussy@ifnextchar.{\@checkperiod}{\@}}
\renewcommand{\glspl}[1]{%
\oldglspl{#1}\fussy@ifnextchar.{\@checkperiod}{\@}}
\newcommand{\@checkperiod}[1]{%
	\ifnum\sfcode`\.=\spacefactor\else#1\fi
}
\newacronym{wrt}{w.r.t.}{with respect to}
\newacronym{RHS}{R.H.S.}{right-hand side}
\newacronym{LHS}{L.H.S.}{left-hand side}
\newacronym{iid}{i.i.d.}{independent and identically distributed}
\newacronym{SOTA}{SOTA}{state-of-the-art}
\let\saved@bibitem\@bibitem\makeatother
\let\@bibitem\saved@bibitem\makeatother
\crefname{equation}{}{}
\Crefname{equation}{}{}
\crefname{claim}{claim}{claims}
\crefname{step}{step}{steps}
\crefname{line}{line}{lines}
\crefname{condition}{condition}{conditions}
\crefname{dmath}{}{}
\crefname{dseries}{}{}
\crefname{dgroup}{}{}
\crefname{Problem}{Problem}{Problems}
\crefname{Theorem}{Theorem}{Theorems}
\crefname{Corollary}{Corollary}{Corollaries}
\crefname{Proposition}{Proposition}{Propositions}
\crefname{Lemma}{Lemma}{Lemmas}
\crefname{Definition}{Definition}{Definitions}
\crefname{Example}{Example}{Examples}
\crefname{Assumption}{Assumption}{Assumptions}
\crefname{Remark}{Remark}{Remarks}
\crefname{Rem}{Remark}{Remarks}
\crefname{remarks}{Remarks}{Remarks}
\crefname{Appendix}{Appendix}{Appendices}
\crefname{Supplement}{Supplement}{Supplements}
\crefname{Exercise}{Exercise}{Exercises}
\crefname{Theorem_A}{Theorem}{Theorems}
\crefname{Corollary_A}{Corollary}{Corollaries}
\crefname{Proposition_A}{Proposition}{Propositions}
\crefname{Lemma_A}{Lemma}{Lemmas}
\crefname{Definition_A}{Definition}{Definitions}
		\let\Cref\crtCref
		\let\cref\crtcref
\def\cleartheorem#1{%
    \expandafter\let\csname#1\endcsname\relax
    \expandafter\let\csname c@#1\endcsname\relax
}
\def\clearthms#1{ \@for\tname:=#1\do{\cleartheorem\tname} }
		\newtheorem{Theorem}{Theorem}
		\newtheorem{Corollary}{Corollary}
		\newtheorem{Proposition}{Proposition}
		\newtheorem{Lemma}{Lemma}
		\newtheorem{Theorem}{Theorem}
\theoremstyle{remark}
\theoremstyle{plain}
\newcommand{\qednew}{\nobreak \ifvmode \relax \else
		\ifdim\lastskip<1.5em \hskip-\lastskip
			\hskip1.5em plus0em minus0.5em \fi \nobreak
		\vrule height0.75em width0.5em depth0.25em\fi}
\NewDocumentCommand{\movedownsub}{e{^_}}{%
	\IfNoValueTF{#1}{%
		\IfNoValueF{#2}{^{}}
	}{%
		^{#1}
	}%
	\IfNoValueF{#2}{_{#2}}
}
\let\latexchi\chi
\RenewDocumentCommand{\chi}{}{\latexchi\movedownsub}
\newcommand{\calE}{\mathcal{E}}
\newcommand{\calV}{\mathcal{V}}
\newcommand{\bG}{\mathbf{G}}
\newcommand{\bS}{\mathbf{S}}
\newcommand{\bW}{\mathbf{W}}
\newcommand{\bbN}{\mathbb{N}}
\DeclareSymbolFont{bsfletters}{OT1}{cmss}{bx}{n}
\DeclareSymbolFont{ssfletters}{OT1}{cmss}{m}{n}
\DeclareMathSymbol{\bsfGamma}{0}{bsfletters}{'000}
\DeclareMathSymbol{\ssfGamma}{0}{ssfletters}{'000}
\DeclareMathSymbol{\bsfDelta}{0}{bsfletters}{'001}
\DeclareMathSymbol{\ssfDelta}{0}{ssfletters}{'001}
\DeclareMathSymbol{\bsfTheta}{0}{bsfletters}{'002}
\DeclareMathSymbol{\ssfTheta}{0}{ssfletters}{'002}
\DeclareMathSymbol{\bsfLambda}{0}{bsfletters}{'003}
\DeclareMathSymbol{\ssfLambda}{0}{ssfletters}{'003}
\DeclareMathSymbol{\bsfXi}{0}{bsfletters}{'004}
\DeclareMathSymbol{\ssfXi}{0}{ssfletters}{'004}
\DeclareMathSymbol{\bsfPi}{0}{bsfletters}{'005}
\DeclareMathSymbol{\ssfPi}{0}{ssfletters}{'005}
\DeclareMathSymbol{\bsfSigma}{0}{bsfletters}{'006}
\DeclareMathSymbol{\ssfSigma}{0}{ssfletters}{'006}
\DeclareMathSymbol{\bsfUpsilon}{0}{bsfletters}{'007}
\DeclareMathSymbol{\ssfUpsilon}{0}{ssfletters}{'007}
\DeclareMathSymbol{\bsfPhi}{0}{bsfletters}{'010}
\DeclareMathSymbol{\ssfPhi}{0}{ssfletters}{'010}
\DeclareMathSymbol{\bsfPsi}{0}{bsfletters}{'011}
\DeclareMathSymbol{\ssfPsi}{0}{ssfletters}{'011}
\DeclareMathSymbol{\bsfOmega}{0}{bsfletters}{'012}
\DeclareMathSymbol{\ssfOmega}{0}{ssfletters}{'012}
\newcommand*\rel@kern[1]{\kern#1\dimexpr\macc@kerna}
\newcommand*\widebar[1]{%
  \begingroup
  \def\mathaccent##1##2{%
    \rel@kern{0.8}%
    \overline{\rel@kern{-0.8}\macc@nucleus\rel@kern{0.2}}%
    \rel@kern{-0.2}%
  }%
  \macc@depth\@ne
  \let\math@bgroup\@empty \let\math@egroup\macc@set@skewchar
  \mathsurround\z@ \frozen@everymath{\mathgroup\macc@group\relax}%
  \macc@set@skewchar\relax
  \let\mathaccentV\macc@nested@a
  \macc@nested@a\relax111{#1}%
  \endgroup
}
\DeclareMathOperator{\var}{var}
\DeclareMathOperator{\cov}{cov}
\newcommand{\ifbcdot}[1]{\ifblank{#1}{\cdot}{#1}}
\DeclarePairedDelimiterX\abs[1]{\lvert}{\rvert}{\ifbcdot{#1}}
\DeclarePairedDelimiterX\parens[1]{(}{)}{\ifbcdot{#1}}
\DeclarePairedDelimiterX\brk[1]{[}{]}{\ifbcdot{#1}}
\DeclarePairedDelimiterX\braces[1]{\{}{\}}{\ifbcdot{#1}}
\DeclarePairedDelimiterX\angles[1]{\langle}{\rangle}{\ifblank{#1}{\cdot,\cdot}{#1}}
\DeclarePairedDelimiterX\ip[2]{\langle}{\rangle}{\ifbcdot{#1},\ifbcdot{#2}}
\DeclarePairedDelimiterX\norm[1]{\lVert}{\rVert}{\ifbcdot{#1}}
\DeclarePairedDelimiterX\ceil[1]{\lceil}{\rceil}{\ifbcdot{#1}}
\DeclarePairedDelimiterX\floor[1]{\lfloor}{\rfloor}{\ifbcdot{#1}}
\DeclareFontFamily{U}{matha}{\hyphenchar\font45}
\DeclareFontShape{U}{matha}{m}{n}{
      <5> <6> <7> <8> <9> <10> gen * matha
      <10.95> matha10 <12> <14.4> <17.28> <20.74> <24.88> matha12
      }{}
\DeclareSymbolFont{matha}{U}{matha}{m}{n}
\DeclareFontFamily{U}{mathx}{\hyphenchar\font45}
\DeclareFontShape{U}{mathx}{m}{n}{
      <5> <6> <7> <8> <9> <10>
      <10.95> <12> <14.4> <17.28> <20.74> <24.88>
      mathx10
      }{}
\DeclareSymbolFont{mathx}{U}{mathx}{m}{n}
\DeclareMathDelimiter{\vvvert}{0}{matha}{"7E}{mathx}{"17}
\DeclarePairedDelimiterX\vertiii[1]{\vvvert}{\vvvert}{\ifbcdot{#1}}
\DeclarePairedDelimiterXPP\trace[1]{\operatorname{Tr}}{(}{)}{}{\ifbcdot{#1}} 
\DeclarePairedDelimiterXPP\col[1]{\operatorname{col}}{\{}{\}}{}{\ifbcdot{#1}} 
\DeclarePairedDelimiterXPP\row[1]{\operatorname{row}}{\{}{\}}{}{\ifbcdot{#1}} 
\DeclarePairedDelimiterXPP\erf[1]{\operatorname{erf}}{(}{)}{}{\ifbcdot{#1}}
\DeclarePairedDelimiterXPP\erfc[1]{\operatorname{erfc}}{(}{)}{}{\ifbcdot{#1}}
\DeclarePairedDelimiterXPP\KLD[2]{D}{(}{)}{}{\ifbcdot{#1}\, \delimsize\|\, \ifbcdot{#2}} 
\DeclarePairedDelimiterXPP\op[2]{\operatorname{#1}}{(}{)}{}{#2} 
\newcommand{\setcomp}{^{\mkern-1.5mu\mathsf{c}}} 
\newcommand{\ud}{\,\mathrm{d}} 
\DeclarePairedDelimiterXPP\indicate[1]{{\bf 1}}{\{}{\}}{}{\ifbcdot{#1}}
\NewDocumentCommand\ofrac{s m}{%
	\IfBooleanTF#1%
	{\dfrac{1}{#2}}%
	{\frac{1}{#2}}%
}
\NewDocumentCommand\ddfrac{s m m}{%
	\IfBooleanTF#1%
	{\dfrac{\mathrm{d} {#2}}{\mathrm{d} {#3}}}%
	{\frac{\mathrm{d} {#2}}{\mathrm{d} {#3}}}%
}
\NewDocumentCommand\ppfrac{s m m}{%
	\IfBooleanTF#1%
	{\dfrac{\partial {#2}}{\partial {#3}}}%
	{\frac{\partial {#2}}{\partial {#3}}}%
}
\providecommand\given{}
\DeclarePairedDelimiterX\Set[2]\{\}{%
\renewcommand\given{\SetSymbol[\delimsize]{#1}}
#2
}
\DeclarePairedDelimiterX\Setc[1]\{\}{%
\renewcommand\given{\SetSymbol{:}}
#1
}
\NewDocumentCommand\set{s o m}{%
	\IfBooleanTF#1%
	{\IfValueTF{#2}{\Set*{#2}{#3}}{\Setc*{#3}}}%
	{\IfValueTF{#2}{\Set{#2}{#3}}{\Setc{#3}}}%
}
\NewDocumentCommand{\evalat}{ s O{\big} m e{_^} }{%
\IfBooleanTF{#1}%
{\left. #3 \right|}{#3#2|}%
\IfValueT{#4}{_{#4}}%
\IfValueT{#5}{^{#5}}%
}
\providecommand\given{}
\DeclarePairedDelimiterXPP\cprob[1]{}(){}{
\renewcommand\given{\nonscript\,\delimsize\vert\allowbreak\nonscript\,\mathopen{}}%
\DeclarePairedDelimiterXPP\cexp[1]{}[]{}{
\renewcommand\given{\nonscript\,\delimsize\vert\allowbreak\nonscript\,\mathopen{}}%
#1%
}
\DeclareDocumentCommand \P { s e{_^} d() g } {%
	\mathbb{P}%
	\IfBooleanTF{#1}%
		{
			\IfValueT{#2}{_{#2}}%
			\IfValueT{#3}{^{#3}}%
			\IfValueTF{#5}{\cprob{#4 \given #5}}{\IfValueT{#4}{\cprob{#4}}}%
		}%
		{
			\IfValueT{#2}{_{#2}}%
			\IfValueT{#3}{^{#3}}%
			\IfValueTF{#5}{\cprob*{#4 \given #5}}{\IfValueT{#4}{\cprob*{#4}}}%
		}%
}
\DeclareDocumentCommand \E { s e{_^} o g } {%
	\mathbb{E}%
	\IfBooleanTF{#1}%
		{
			\IfValueT{#2}{_{#2}}%
			\IfValueT{#3}{^{#3}}%
			\IfValueTF{#5}{\cexp{#4 \given #5}}{\IfValueT{#4}{\cexp{#4}}}%
		}%
		{
			\IfValueT{#2}{_{#2}}%
			\IfValueT{#3}{^{#3}}%
			\IfValueTF{#5}{\cexp*{#4 \given #5}}{\IfValueT{#4}{\cexp*{#4}}}%
		}%
}
\DeclareDocumentCommand \Var { s e{_^} d() g } {%
	\var%
	\IfBooleanTF{#1}%
		{
			\IfValueT{#2}{_{#2}}%
			\IfValueT{#3}{^{#3}}%
			\IfValueTF{#5}{\cprob{#4 \given #5}}{\IfValueT{#4}{\cprob{#4}}}%
		}%
		{
			\IfValueT{#2}{_{#2}}%
			\IfValueT{#3}{^{#3}}%
			\IfValueTF{#5}{\cprob*{#4 \given #5}}{\IfValueT{#4}{\cprob*{#4}}}%
		}%
}
\DeclareDocumentCommand \Cov { s e{_^} d() g } {%
	\cov%
	\IfBooleanTF{#1}%
		{
			\IfValueT{#2}{_{#2}}%
			\IfValueT{#3}{^{#3}}%
			\IfValueTF{#5}{\cprob{#4 \given #5}}{\IfValueT{#4}{\cprob{#4}}}%
		}%
		{
			\IfValueT{#2}{_{#2}}%
			\IfValueT{#3}{^{#3}}%
			\IfValueTF{#5}{\cprob*{#4 \given #5}}{\IfValueT{#4}{\cprob*{#4}}}%
		}%
}
\NewDocumentCommand \dist {m o o} {%
\mathrm{#1}\left(%
	\IfValueT{#3}{%
		\tl_if_blank:nTF{ #3 }{\cdot\, \middle|\, }{#3\, \middle|\, }%
	}
	\IfValueT{#2}{#2}%
\right)%
}
\NewDocumentCommand {\cbrace} {t+ D[]{black} D(){\widthof{#5}} m m } {%
	\begingroup%
		\color{#2}
		\IfBooleanTF{#1}{%
			\overbrace{#4}^%
		}{
			\underbrace{#4}_%
		}%
		{\parbox[c]{#3}{\centering\footnotesize{#5}}}%
	\endgroup%
}
\let\oldforall\forall
\renewcommand{\forall}{\oldforall \, }
\let\oldexist\exists
\renewcommand{\exists}{\oldexist \, }
\newcommand{\rankcolor}[2]{%
	\expandafter\renewcommand\csname #1\endcsname[1]{%
		\ifblank{##1}{%
			{\color{#2} \textbf{#2}}%
		}{%
			\ifmmode
				\textcolor{#2}{\bm{##1}}%
			\else%
				{\color{#2} \textbf{##1}}%
			\fi	
		}%
	}
}
\DeclareDocumentCommand{\includeCroppedPdf}{ o O{./Figures/} m }{
	\IfFileExists{#2#3-crop.pdf}{}{%
		\immediate\write18{pdfcrop #2#3.pdf #2#3-crop.pdf}}%
	\includegraphics[#1]{#2#3-crop.pdf}
}
\newcommand*{\addFileDependency}[1]{
  \typeout{(#1)}
  \@addtofilelist{#1}
  \IfFileExists{#1}{}{\typeout{No file #1.}}
}
\definecolor{gray90}{gray}{0.9}
\def\colorlist{red,blue,brown,cyan,darkgray,gray,lightgray,green,lime,magenta,olive,orange,pink,purple,teal,violet,white,yellow}
\def\startcomment{[}
	\newcommand{\createcolor}[1]{%
			\expandafter\newcommand\csname #1\endcsname[1]{{\color{#1} ##1}}%
	}
	\newcommand{\msout}[1]{\text{\color{green} \sout{\ensuremath{#1}}}}
	\newcommand{\del}[1]{{\color{green}\ifmmode \msout{#1}\else\sout{#1}\fi}}
	\newcommand{\createcolor}[1]{%
			\expandafter\newcommand\csname #1\endcsname[1]{%
				\noexpandarg%
				\StrChar{##1}{1}[\firstletter]%
				\if\firstletter\startcomment%
					\relax
				\else%
					##1
				\fi
			}%
	}
	\newcommand{\msout}[1]{}
	\newcommand{\del}[1]{}
\def\@tempa#1,{%
    \ifx\relax#1\relax\else
        \createcolor{#1}%
        \expandafter\@tempa
    \fi
}
\newcommand{\hhide}[1]{}
	\def\@testdef #1#2#3{%
		\def\reserved@a{#3}\expandafter \ifx \csname #1@#2\endcsname
			\reserved@a  \else
			\typeout{^^Jlabel #2 changed:^^J%
				\meaning\reserved@a^^J%
				\expandafter\meaning\csname #1@#2\endcsname^^J}%
			\@tempswatrue \fi}
\crefname{question}{question}{questions}
\pgfplotsset{compat=1.5}
\providecommand{\U}[1]{\protect\rule{.1in}{.1in}}
\theoremstyle{definition}
\newacronym{KRR}{KRR}{kernel ridge regression}
\newacronym{RBF}{RBF}{radial basis function}
\newacronym{RMSE}{RMSE}{relative mean-squared error}
\newacronym{GGSP}{GGSP}{generalized graph signal processing}
\newacronym{GSP}{GSP}{graph signal processing}
\newacronym{TSP}{TSP}{topological signal processing}
\newacronym{GFT}{GFT}{graph Fourier transform}
\begin{document}
\title{A sampling construction of graphon 1-norm convergence}
\author{Xingchao Jian, Feng Ji, Wee Peng Tay \thanks{The authors are with School of Electrical and Electronic Engineering, Nanyang Technological University, Singapore.}}


\maketitle 

\begin{abstract}
In the short note, we describe a sampling construction that yields a sequence of graphons converging to a prescribed limit graphon in $1$-norm. This convergence is stronger than the convergence in the cut norm, usually used to study graphon sequences. The note also contains errata of the previous version of the note.
\end{abstract}


\section{Introduction}
Recall that a graphon is a symmetric measurable function $\bW: [0,1]^2 \to [0,1]$ \cite{Lov:12} (see \cite{Lua21} for a signal processing framework based on the graphon theory). It is a device that can be used to study large graphs, as graphons are limit objects of graph sequences with the cut norm $\norm{\cdot}_{\square}$ defined by
\begin{align*}
  \norm{\bW}_{\square} = \sup_{U,V \subset [0,1]} |\int_{U,V} W(x,y) \ud x\ud y|,   
\end{align*}
where the supreme is taken over all pairs of measurable subsets $U,V$ of $[0,1]$. Apparently, for any $\bW$, we have $\norm{\bW}_{\square} \leq \norm{\bW}_1$, where $\norm{\cdot}_1$ is the usual $L^1$-norm. On the other hand, the cut norm is strictly weaker than  $L^1$-norm $\norm{\cdot}_1$. There are sequences of graphons converging in $\norm{\cdot}_{\square}$ that do not converge in $\norm{\cdot}_1$, e.g., the canonical graphons associated with a sequence of ER graphs sampled from a limit graphon $\bW$ \cite{Lov:12}.   

However, it is usually more difficult to deal with $\norm{\cdot}_{\square}$ as one needs to the supreme taken over $U,V \subset [0,1]$. In this short note, we give a sampling construction that generates a sequence of graphons that converges to a prescribed limit graphon in $\norm{\cdot}_1$, which is a strong notion of convergence.  

\section{The sampling construction} \label{sec:aeo}

In this section, we describe the promised sampling construction. Assume that $\bW$ is Riemann integrable. This is true if $\bW$ is continuous a.e., by the Lebesgue integrability criterion.

We define a sequence of random graphs of size $n$ ($n\in\bbN$):
\begin{itemize}
    \item Partition $[0,1]$ into $n$ equal intervals of size $1/n$ each.
    \item Uniformly random sample $x_i$ from the $i$-th interval.
    \item Form random graph $\bG_n$ on $v_1,\ldots, v_n$ such that there is an edge $(v_i,v_j)$ with probability $\bW(x_i,x_j)$ if $i\neq j$. 
\end{itemize}

Recall that, for a graph $\bG = (\calV,\calE)$, $\calV = \set{v_i\given i=1,\dots,n}$, its associated canonical graphon is the following step function 
\begin{align*}
    \bW_{\bG}(x,x')= \sum_{i,j=1}^n \bS_{ij} I_i(x)I_j(x'),
\end{align*}
where $\bS=(\bS_{ij})$ is the adjacency matrix of $\bG$, and $I_{i}$, for $i=1,2,\dots,n$, are the indicator functions of the sets $[0, \ofrac{n}), [\frac{1}{n}, \frac{2}{n}),\dots,[\frac{n-1}{n}, 1]$. 

For each $n$, let $\bW_{\bG_n}$ be the canonical graphon associated with $\bG_n$. It is a step function on the $n\times n$ grid of $[0,1]^2$. We have essentially obtained a distribution of graphs in the sense of \cite{Ji23}. If we view $T_{\bW_{\bG_n}}$ as a random operator, then convolution amounts to taking expectation as in \cite{Ji23}. Motivated by this observation, we shall see that the expected graphon converges in $1$-norm in this section.

To state a general result, we formally define $k$-th power of graphons. Given graphons $\bW$ and $\bW'$, their product $\bW\odot\bW'$ is the graphon such that $T_{\bW\odot\bW'} = T_{\bW}\circ T_{\bW'}$. Explicitly,  
\begin{align*}
    \bW\odot\bW'(x,y) = \int_{[0,1]} \bW(x,z)\bW'(z,y) \ud z.
\end{align*}
Denote the Lebesgue measure on $\mathbb{R}^n$ by $\mu_{\mathbb{R}^n}$. For convenience, we write $\mu$ for $\mu_{\mathbb{R}^2}$. Let $\bW^{\odot k}$ be the $k$-th fold product of $\bW$ (with itself). We first notice that $\bW^{\odot k}$ is Riemann integrable for any $k\geq 1$. 

\begin{Lemma} \label{lem:iwa}
    If $\bW$ and $\bW'$ are both Riemann integrable, then so is $\bW\odot\bW'$.
\end{Lemma}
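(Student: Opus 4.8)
The plan is to reduce the statement to the Lebesgue integrability criterion. Since $\bW,\bW'$ take values in $[0,1]$, the function $f \coloneqq \bW\odot\bW'$, given by $f(x,y)=\int_{[0,1]}\bW(x,z)\bW'(z,y)\ud z$, is itself valued in $[0,1]$, hence bounded; so it suffices to show that the set of points at which $f$ is discontinuous is $\mu$-null. Let $D,D'\subseteq[0,1]^2$ denote the discontinuity sets of $\bW$ and $\bW'$. Each of these is an $F_\sigma$ set, hence Borel, and by the Lebesgue criterion applied to the Riemann integrable functions $\bW,\bW'$ we have $\mu(D)=\mu(D')=0$.

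Next I would pass from these two-dimensional null sets to one-dimensional slices via Tonelli's theorem. For $x\in[0,1]$ set $D_x=\set{z\given(x,z)\in D}$, and for $y\in[0,1]$ set $E_y=\set{z\given(z,y)\in D'}$; then $\int_0^1\mu_{\mathbb{R}}(D_x)\ud x=\mu(D)=0$ and $\int_0^1\mu_{\mathbb{R}}(E_y)\ud y=\mu(D')=0$, so there are $\mu_{\mathbb{R}}$-null sets $N,N'\subseteq[0,1]$ with $\mu_{\mathbb{R}}(D_x)=0$ for all $x\notin N$ and $\mu_{\mathbb{R}}(E_y)=0$ for all $y\notin N'$. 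I then claim $f$ is continuous at every $(x_0,y_0)$ with $x_0\notin N$ and $y_0\notin N'$. Given an arbitrary sequence $(x_n,y_n)\to(x_0,y_0)$, for each $z\notin D_{x_0}\cup E_{y_0}$ — that is, for $\mu_{\mathbb{R}}$-a.e.\ $z$ — the map $\bW$ is continuous at $(x_0,z)$ and $\bW'$ is continuous at $(z,y_0)$, so $\bW(x_n,z)\bW'(z,y_n)\to\bW(x_0,z)\bW'(z,y_0)$. As the integrands are uniformly bounded by $1$ on the finite measure space $[0,1]$, the bounded convergence theorem gives $f(x_n,y_n)\to f(x_0,y_0)$. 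Hence the discontinuity set of $f$ is contained in $(N\times[0,1])\cup([0,1]\times N')$, which is $\mu$-null, and the Lebesgue criterion shows $f=\bW\odot\bW'$ is Riemann integrable.

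The single non-routine ingredient is the slicing step: it uses the classical fact that the discontinuity set of a function on a metric space is $F_\sigma$ (it equals $\bigcup_m\set{p\given \operatorname{osc}_{\bW}(p)\geq 1/m}$, a countable union of closed sets), hence measurable, so that Tonelli's theorem legitimately applies to its indicator. Once that is granted, everything else — boundedness of $f$, the reduction to measure-zero discontinuities, the a.e.\ pointwise convergence of the integrands, and the final bounded-convergence continuity argument — is standard; the Riemann integrability of $\bW^{\odot k}$ for every $k\geq 1$ then follows by induction on $k$.
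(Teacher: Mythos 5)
Your proof is correct, but it takes a genuinely different route from the paper. The paper also reduces to the Lebesgue criterion, but then argues by contradiction with an elementary compactness scheme: it covers the null discontinuity set of $\bW$ and $\bW'$ by a small open set $U$, exploits uniform continuity of $\bW,\bW'$ on the compact complement $U\setcomp$ (via a finite-subcover argument to manufacture a uniform $\delta_1$), removes the columns and rows whose slices meet $U$ in measure $\geq \epsilon_2$ (a Markov-type bound playing the role of your Tonelli step), and then estimates $|\bW\odot\bW'(x,y)-\bW\odot\bW'(x',y')|$ by splitting the $z$-integral into the small bad set and its complement, finally tuning $\epsilon_0,\epsilon_1,\epsilon_2$ to contradict $\mu(D)>0$. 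You instead go directly: slice the (measurable, $F_\sigma$) discontinuity sets $D,D'$ by Tonelli to get null exceptional sets $N,N'$ of abscissae and ordinates, and show continuity of $\bW\odot\bW'$ at every $(x_0,y_0)\notin (N\times[0,1])\cup([0,1]\times N')$ by sequential continuity plus the bounded convergence theorem, since for a.e.\ $z$ the integrand $\bW(x_n,z)\bW'(z,y_n)$ converges pointwise and is bounded by $1$. This is shorter and avoids the paper's delicate $\delta_1$ construction; what the paper's quantitative argument buys is an explicit modulus-of-continuity-style control off a set of prescribed small measure, which your soft convergence argument does not provide (but the lemma does not require it). One shared technicality, which affects the paper equally and is implicit in the definition of $\odot$, is the measurability of the slices $z\mapsto\bW(x,z)$ and $z\mapsto\bW'(z,y)$ for the fixed points at which you evaluate; granting that, your induction on $k$ for $\bW^{\odot k}$ matches the paper's use of the lemma.
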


The technical proof is contained in the appendix. The following is the result on $1$-norm convergence. 

\begin{Theorem}
If $\bW$ is Riemann integrable, then $(\mathbb{E}(\bW_{\bG_n}))^{\odot k} \to \bW^{\odot k}$ in $1$-norm as $n\to \infty$ for any $k\geq 1$. 
\end{Theorem}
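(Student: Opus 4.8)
The plan is to first pin down the deterministic graphon $\bW_n:=\mathbb{E}(\bW_{\bG_n})$ explicitly, then reduce the statement for general $k$ to the case $k=1$ via a telescoping identity plus sub-multiplicativity of $\odot$, and finally dispatch $k=1$ with the Riemann criterion. For Step~1, write $J_i:=[\tfrac{i-1}{n},\tfrac{i}{n})$. For $(x,y)\in J_i\times J_j$ one has $\bW_{\bG_n}(x,y)=\bS_{ij}$, and taking expectation first over the edges and then over the sample points, $\mathbb{E}[\bS_{ij}]=\mathbb{E}[\bW(x_i,x_j)]=n^2\int_{J_i}\int_{J_j}\bW(s,t)\ud s\ud t$ for $i\neq j$ (the $x_i$ are independent and uniform on their intervals), while $\mathbb{E}[\bS_{ii}]=0$ since $\bG_n$ has no self-loops. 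Thus $\bW_n$ is the blockwise average of $\bW$ over the $n\times n$ grid with the $n$ diagonal blocks set to $0$; in particular $\bW_n$ is again a graphon with $0\le\bW_n\le1$.

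For Step~2, since $\odot$ is composition of the operators $T_\bullet$ it is associative, so for any $k\ge1$
\[
\bW_n^{\odot k}-\bW^{\odot k}=\sum_{\ell=0}^{k-1}\bW_n^{\odot\ell}\odot(\bW_n-\bW)\odot\bW^{\odot(k-1-\ell)},
\]
with empty products omitted. For bounded measurable kernels $K,L$ on $[0,1]^2$, Fubini together with $\mu_{\mathbb{R}}([0,1])=1$ gives $\norm{K\odot L}_1\le\norm{K}_\infty\norm{L}_1$, $\norm{K\odot L}_1\le\norm{K}_1\norm{L}_\infty$ and $\norm{K\odot L}_\infty\le\norm{K}_\infty\norm{L}_\infty$. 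Applying these to each summand and using $\norm{\bW_n}_\infty\le1$, $\norm{\bW}_\infty\le1$ (hence $\norm{\bW_n^{\odot\ell}}_\infty\le1$ and $\norm{\bW^{\odot(k-1-\ell)}}_\infty\le1$) yields
\[
\norm{\bW_n^{\odot k}-\bW^{\odot k}}_1\le k\,\norm{\bW_n-\bW}_1 .
\]
So it suffices to show $\norm{\bW_n-\bW}_1\to0$. (\cref{lem:iwa} is still invoked, but only to certify that the limit $\bW^{\odot k}$ is itself Riemann integrable; it plays no role in this bound.)

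For Step~3 ($k=1$), let $\widetilde\bW_n$ be the blockwise average of $\bW$ \emph{without} zeroing the diagonal. On $J_i\times J_j$ its constant value $a_{ij}$ lies in $[m_{ij},M_{ij}]$ with $m_{ij}=\inf_{J_i\times J_j}\bW$ and $M_{ij}=\sup_{J_i\times J_j}\bW$, so $\int_{J_i\times J_j}\abs{a_{ij}-\bW}\le(M_{ij}-m_{ij})n^{-2}$; summing over $i,j$ gives $\norm{\widetilde\bW_n-\bW}_1\le U(\bW,P_n)-L(\bW,P_n)$, the Darboux gap of $\bW$ over the uniform partition $P_n$ of $[0,1]^2$ into $n^2$ squares, which tends to $0$ because $\bW$ is Riemann integrable and $\operatorname{mesh}(P_n)\to0$. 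Since $\bW_n$ and $\widetilde\bW_n$ agree off the $n$ diagonal blocks (total measure $1/n$) and are both bounded by $1$, we get $\norm{\bW_n-\widetilde\bW_n}_1\le1/n$, hence $\norm{\bW_n-\bW}_1\le U(\bW,P_n)-L(\bW,P_n)+1/n\to0$, which closes the argument.

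Steps~1 and~2 are essentially bookkeeping; the one genuinely analytic ingredient, and the place I expect any friction, is the $L^1$-convergence of the blockwise averages in Step~3 over the \emph{uniform} (non-nested) partitions $P_n$. I resolve it through the Darboux-gap estimate, which is precisely where Riemann integrability is used; alternatively one could observe that the averaging operators are $L^1$-contractions and combine this with density of continuous functions in $L^1$, which would even relax the hypothesis to mere integrability. Everything else in the proof is stable under this single estimate.
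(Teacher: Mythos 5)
Your proof is correct. For $k=1$ it coincides with the paper's argument: $\mathbb{E}(\bW_{\bG_n})$ is the blockwise average of $\bW$ over the $n\times n$ grid with zeroed diagonal (your normalization $n^2\int_{J_i\times J_j}\bW$ is the right one), the off-diagonal error is controlled by the Darboux gap of the uniform mesh-$1/n$ partition, and the diagonal blocks contribute at most $1/n$. Where you genuinely differ is the passage to general $k$: you telescope $\bW_n^{\odot k}-\bW^{\odot k}$ into terms $\bW_n^{\odot\ell}\odot(\bW_n-\bW)\odot\bW^{\odot(k-1-\ell)}$ and use $\norm{K\odot L}_1\le\min\{\norm{K}_\infty\norm{L}_1,\norm{K}_1\norm{L}_\infty\}$ to get the quantitative reduction $\norm{\bW_n^{\odot k}-\bW^{\odot k}}_1\le k\norm{\bW_n-\bW}_1$, whereas the paper's proof invokes \cref{lem:iwa} together with an inductive claim that $(\mathbb{E}(\bW_{\bG_n}))^{\odot k}$ lies between the blockwise infimum and supremum of $\bW^{\odot k}$ off the diagonal, and then reruns the Darboux argument for $\bW^{\odot k}$; your route is exactly the alternative the paper sketches in the remark following the theorem but leaves without details. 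Your version buys several things: it dispenses with \cref{lem:iwa} entirely (the most technical part of the note), it gives an explicit Lipschitz-type constant $k$, and it sidesteps the sandwich claim, which as literally stated needs care because the zeroed diagonal pushes $(\mathbb{E}(\bW_{\bG_n}))^{\odot k}$ below the blockwise infimum of $\bW^{\odot k}$ by an $O(1/n)$ amount (take $\bW\equiv 1$, $k=2$), an error your telescoping absorbs automatically. What the paper's route yields in exchange is the side fact that $\bW^{\odot k}$ is itself Riemann integrable; since your argument never needs this, your parenthetical ``invocation'' of \cref{lem:iwa} can simply be dropped. Your closing observation that the block-averaging operators are $L^1$-contractions, so that density of continuous functions would relax the hypothesis to $\bW\in L^1$, is also sound.
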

\begin{proof}
   We first prove the case $k=1$. Notice that for each sample of the random graph $\bG_n$, its canonical graphon $\bW_{\bG_n}$ is a step function on $n\times n$ grid of $[0,1]^2$, so does $\mathbb{E}(\bW_{\bG_n})$. For $i\neq j$, on the $(i,j)$-th grid $I_{ij}$ of $[0,1]^2$, $\mathbb{E}(\bW_{\bG_n})$ takes the value 
    \begin{align*}
    w_{n,i,j} = \frac{1}{n^2}\int_{I_{ij}} \bW(x_i,x_j)\ud x_i\ud x_j,
    \end{align*}
    which is the average of $\bW$ on $I_{ij}$. Hence, it is bounded between the supreme and infimum of $\bW$ on $I_{ij}$, i.e., 
    \begin{align*}
    \inf_{i,j} \stackrel{\Delta}{=} \inf_{(x_i,x_j)\in I_{ij}}\bW(x_i,x_j) & \leq w_{n,i,j} \\ & \leq \sup_{i,j} \stackrel{\Delta}{=} \sup_{(x_i,x_j)\in I_{ij}}\bW(x_i,x_j).
    \end{align*}
    Observe that outside the diagonal blocks $\cup_{1\leq i\leq n} I_{ii}$, the integration of $\mathbb{E}(\bW_{\bG_n})$ is a Riemann sum. We can therefore estimate $\norm{\mathbb{E}(\bW_{\bG_n})- \bW}_{1}$ as: 
\begin{align*}
           & \norm{\mathbb{E}(\bW_{\bG_n})- \bW}_{1} \\
       \leq  & \sum_{1\leq i\neq j \leq n} \int_{(x,y) \in I_{ij}} |\bW(x,y)-w_{n,i,j}| \ud x \ud y \\
       & + \sum_{1\leq i\leq n}\int_{(x,y) \in I_{ii}} |\bW(x,y)| \ud x\ud y \\
       \leq & \sum_{1\leq i\neq j \leq n} \int_{(x,y) \in I_{ij}} |\bW(x,y)-w_{n,i,j}| \ud x \ud y  + \frac{1}{n} \norm{\bW}_{\infty} \\
       \leq & \sum_{1\leq i\neq j \leq n} \int_{(x,y) \in I_{ij}} \big(\sup_{i,j}-\inf_{i,j}\big) \ud x \ud y  + \frac{1}{n} \norm{\bW}_{\infty}\\
       \leq & \sum_{1\leq i, j \leq n} \int_{(x,y) \in I_{ij}} \big(\sup_{i,j}-\inf_{i,j}\big) \ud x \ud y  + \frac{1}{n} \norm{\bW}_{\infty}.
\end{align*}
The expression converges to $0$ as $n \to \infty$ as $\bW$ is assumed to be Riemann integrable. 

For general $k$, we first observe that by \cref{lem:iwa} and induction, $\bW^{\odot k}$ is Riemann integrable. Moreover, $(\mathbb{E}(\bW_{\bG_n}))^{\odot k}$ is a step function that is constant on each of the $n\times n$ grid of $[0,1]^2$. We may argue inductively that for each $(x,y)$ in the $(i,j)$-th grid $I_{ij}$ with $i\neq j$, we have the relations 
\begin{align*}
\inf_{(x',y')\in I_{ij}} \bW^{\odot k}(x',y') &\leq (\mathbb{E}(\bW_{\bG_n}))^{\odot k}(x,y) \\ &\leq \sup_{(x',y')\in I_{ij}} \bW^{\odot k}(x',y').
\end{align*}
By the same argument as $k=1$, we have $(\mathbb{E}(\bW_{\bG_n}))^{\odot k} \to \bW^{\odot k}, n\to \infty$ in $1$-norm.
\end{proof}

We first remark that the result for $k>1$ can also be alternatively proved by induction without using \cref{lem:iwa}. One needs to estimate integrals of the form 
\begin{align*}
\int_{(x,y)} |\int_z \bW'_1(x,z)\big(\bW_1(z,y)-\bW_2(z,y)\big)\ud z| \ud y\ud x
\end{align*}
when we split the difference 
\begin{align*}
& \bW_1'(x,z)\bW_1(z,y)- \bW_2'(x,z)\bW_2(z,y) \\
= & \bW'_1(x,z)\big(\bW_1(z,y)-\bW_2(z,y)\big) \\
& + \big(\bW'_1(x,z) - \bW_2'(x,z)\big) \bW_2(z,y).
\end{align*}

For this, we
use $|\bW'_1(x,z)| \leq 1$ and the Fubini theorem:
\begin{align*}
&\int_{(x,y)} |\int_z \bW'_1(x,z)\big(\bW_1(z,y)-\bW_2(z,y)\big)\ud z| \ud y\ud x\\
\leq & \int_{(x,y)} \int_z |\bW'_1(x,z)\big(\bW_1(z,y)-\bW_2(z,y)\big)|\ud z \ud y\ud x \\
\leq & \int_{y} \int_z |\bW_1(z,y)-\bW_2(z,y)|\ud z \ud y = \norm{\bW_1-\bW_2}_1.
\end{align*}
Details are omitted here.

We have $1$-norm convergence of a polynomial in $\mathbb{E}(\bW_{\bG_n})$ to that in $\bW$ for the same polynomial. Having $1$-norm convergence is always desirable, as it is the strongest version of convergence. However, the trade-off is that to empirically approximate $\mathbb{E}(\bW_{\bG_n})$, we need to draw sufficiently many samples. For each draw, a different $x_i$ might be chosen from the $i$-th interval.

\section{Errata of the previous version}
In the previous version of the note, we have mistakenly claimed that certain results in \cite{Lua21} are incorrect. After going through the reply \cite{Rui24} and the original paper, we agree that the statements of  \cite{Lua21} are correct and an error in a proof is fixable, which is done in \cite{Rui24}. We apologize to the authors of \cite{Lua21} regarding our wrong claims. 

\section{Conclusions}\label{sect:conc}

In the theory of graphon convergence, the $1$-norm convergence is strictly stronger than the convergence in the cut norm. However, it is usually easier to deal with the $1$-norm. We have described a sampling construction that yields a sequence of graphons converging to a prescribed limit graphon in $1$-norm, in contrast to known constructions in the literature that yield sequences converge in the cut norm. 

\appendix

\section{}
\begin{proof}[Proof of \cref{lem:iwa}]
By the Lebesgue integrability criterion, it suffices to show that $\bW''=\bW\odot\bW'$ is continuous a.e. Suppose on the contrary that $\bW''$ is discontinuous on a set $D$ with Lebesgue measure $\mu(D)>0$. This means there is $\epsilon>0$ such that for every $(x,y)\in D, \delta>0$, there is $(x',y')$ such that $\norm{(x,y)-(x',y')} < \delta$ and $|\bW''(x,y)-\bW''(x',y')|\geq \epsilon$. 

Let $S \subset [0,1]^2$ be the union of sets of discontinuous points of $\bW$ and $\bW'$. We have $\mu(S)=0$. For any $\epsilon_1>0$, there is an open subset $U$ containing $S$ and $\mu(U)\leq \epsilon_1$. We use $U\setcomp$ to denote the complement of $U$ in $[0,1]^2$. It is bounded and closed, and hence compact. As $\bW, \bW'$ are both continuous on $U^c$, they are both uniformly continuous on $U^c$. 

We claim that for any $\epsilon_0>0$, there is a $\delta_1$ such that for any $(x',y') \in U^c$ and $(x,y) \in [0,1]^2$ satisfying $\norm{(x,y)-(x',y')} < \delta_1$, we have 
\begin{align*}
|\bW(x,y)-\bW(x',y')|< \epsilon_0, |\bW'(x,y)-\bW'(x',y')|< \epsilon_0.
\end{align*}

To see this, we first notice that as $U^c$ is compact, we have a $\delta_0$ (in place of $\delta_1$) for the conclusion on $\bW,\bW'$ when both $(x,y), (x',y') \in U^c$. For each $z= (x',y') \in U^c$, there is $r_z$ such that for any $w$ in $B_z(r_z)$, the open ball centered at $z$ with radius $r_z$, we have the desired conclusion for $\bW,\bW'$ for $\epsilon_0/4$. Hence the same holds for any two points in $B_z(r_z)$ by triangle inequality with $\epsilon_0/2$. However, $U^c$ is compact, and we have a finite subcover $\{B_i,1\leq i\leq m\}$ of $\{B_z(r_z), z\in U^c\}$. For each $B_i$ in the subcover, let $d_i$ be its minimum distance to the union of $B_j \cap U^c$ for $B_j$ not intersecting $B_i$. We have $d_i>0$ as $U^c$ is closed and we only need to consider finitely many of $B_j$ (in the subcover). Notice that the boundary $\partial$ of the open set $\cup_{1\leq i \leq m}B_i$ is compact and disjoint from $U^c \subset \cup_{1\leq i\leq m}B_i$. There is a positive minimum distance $d_0$ between $\partial$ and $U_c$. We set $\delta_1 = \min \{d_i,1\leq i\leq m\}\cup \{\delta_0\}\cup \{d_0\}$. If $\norm{(x,y)-(x',y')} < \delta_1$ with $(x',y')\in U^c\cap B_i$ for some $B_i$, then either (a) $(x,y) \in U^c$ or (b) $(x,y)$ belongs to some $B_j$ (as $\delta_1 \leq d_0$) such that $B_i\cap B_j \neq \emptyset$ (as $\delta_1 \leq d_i$). The choice of $\delta_0$ settles the former case. For the latter case, comparing with a point $(x'',y'') \in B_i\cap B_j$, we have the desired conclusion by triangle inequality. 

Now for any $\epsilon_2>0$, let $S_{\epsilon_2}$ be the subset of $[0,1]$ consisting of $x$ such that $\mu_{\mathbb{R}^1}(x \times [0,1]\cap U)\geq \epsilon_2$. We have $\mu_{\mathbb{R}^1}(S_{\epsilon_2})\leq \epsilon_1/\epsilon_2$. Similarly, if $S_{\epsilon_2}'$ is the subset of $[0,1]$ consisting of $y$ such that $\mu_{\mathbb{R}^1}([0,1] \times y\cap U)\geq \epsilon_2$. We also have $\mu_{\mathbb{R}^1}(S_{\epsilon_2}')\leq \epsilon_1/\epsilon_2$.  Let $U' = U \cup (S_{\epsilon_2}\times [0,1]) \cup ([0,1] \times S_{\epsilon_2}')$. Its measure satisfies $\mu(U') \leq \epsilon_1+2\epsilon_1/\epsilon_2$. 

For $(x,y) \in U'^c \subset U^c$ and any $(x',y')$ such that $\norm{(x,y)-(x',y')} < \delta_1$, we estimate 
\begin{align*}
    &|\bW''(x,y) - \bW''(x',y')| = |\bW\odot\bW'(x,y)-\bW\odot\bW'(x',y')| \\
    = & |\int_{[0,1]} \bW(x,z)\bW'(z,y)- \bW(x',z)\bW'(z,y')\ud z| \\
    \leq & \int_{[0,1]} |\bW(x,z)\bW'(z,y)- \bW(x',z)\bW'(z,y')|\ud z \\
    \leq & \int_{[0,1]} |\bW(x,z)| \cdot |\bW'(z,y)- \bW'(z,y')|\ud z  \\
    & + \int_{[0,1]} |\bW'(z,y')|\cdot |\bW(x,z)- \bW(x',z)|\ud z \\
    \leq & \int_{[0,1]} |\bW'(z,y)- \bW'(z,y')|\ud z\\ & + \int_{[0,1]} |\bW(x,z)- \bW(x',z)|\ud z.
\end{align*}
For the first summand, notice that $|y-y'| < \delta_1$ and $y$ is outside $S_{\epsilon_2}'$. Therefore, for $z$ outside a set $S_y$ of measure $\epsilon_2$, we have $|\bW'(z,y)-\bW'(z,y')| < \epsilon_0$. Therefore, we estimate 
\begin{align*}
    &\int_{[0,1]} |\bW'(z,y)- \bW'(z,y')|\ud z \\ 
    = & \int_{S_y} |\bW'(z,y)- \bW'(z,y')|\ud z + \int_{S_y^c} |\bW'(z,y)- \bW'(z,y')|\ud z \\
    \leq & 2\mu_{\mathbb{R}^1}(S_y) + (1-\mu_{\mathbb{R}^1}(S_y))\epsilon_0 \\
    \leq & 2\epsilon_2 + \epsilon_0. 
\end{align*}
The same estimation applied to $\int_{[0,1]} |\bW(x,z)- \bW(x',z)|\ud z$ implies that $|\bW''(x,y) - \bW''(x',y')| \leq 4\epsilon_2 + 2\epsilon_0$. 

In summary, for every $(x,y) \in U'^c$ with $\mu(U') \leq \epsilon_1 + 2\epsilon_1/\epsilon_2$, there is $\delta_1$ such that: if $(x',y')$ satisfy $\norm{(x,y)-(x',y')} < \delta_1$, then $|\bW''(x,y)-\bW''(x',y')| < 4\epsilon_2+2\epsilon_0$. As $\epsilon_0,\epsilon_1, \epsilon_2$ can be chosen arbitrarily, we may set $\epsilon_2 = \sqrt{\epsilon_1}$ and $\epsilon_0, \epsilon_1$ small enough such that 
\begin{itemize}
    \item $\epsilon_1 + 2\epsilon_1/\epsilon_2  = \epsilon_1 + 2\sqrt{\epsilon_1} < \mu(D)$. 
    \item $4\epsilon_2+2\epsilon_0 = 4\sqrt{\epsilon_1}+2\epsilon_0 <\epsilon$. 
\end{itemize}
Letting $\delta = \delta_1$ in the statement of the first paragraph, we obtain a contradiction. Hence, $\bW''$ is continuous and a.e., and hence it is Riemann integrable. 
\end{proof}

\bibliographystyle{IEEEtran}
\bibliography{bib/IEEEabrv,bib/StringDefinitions,bib/refs}

\end{document}